\newcommand{\ignore}[1]{}
\def\mb{\mathbf}
\def\bm{\boldsymbol}
\newcommand{\Rmnum}[1]{\uppercase\expandafter{\romannumeral #1\relax}}
\newcommand{\rmnum}[1]{\lowercase\expandafter{\romannumeral #1\relax}}
\def\mb{\mathbf}
\newcommand*\diff{\mathop{}\!\mathrm{d}}
\mathchardef\mhyphen="2D
\newtheorem{theorem}{Theorem}[section]
\newtheorem{proposition}[theorem]{Proposition}
\begin{document}
\preprint{}

\title{Construction of coarse-grained molecular dynamics with many-body non-Markovian memory}
\author{Liyao Lyu}
\affiliation{Department of Computational Mathematics, Science \& Engineering, Michigan State University, MI 48824, USA}%
\author{Huan Lei}
\email{leihuan@msu.edu}
\affiliation{Department of Computational Mathematics, Science \& Engineering, Michigan State University, MI 48824, USA}%
\affiliation{Department of Statistics \& Probability, Michigan State University, MI 48824, USA}%

\begin{abstract}
We introduce a machine-learning-based coarse-grained molecular dynamics (CGMD) model that
faithfully retains the many-body nature of the inter-molecular dissipative interactions.
Unlike the common empirical CG models, the present model is constructed based on the Mori-Zwanzig
formalism and naturally inherits the heterogeneous state-dependent memory term rather than matching
the mean-field metrics such as the velocity auto-correlation function. Numerical results show that
preserving the many-body nature of the memory term is crucial for predicting the collective transport and
diffusion processes, where empirical forms generally show limitations.
\end{abstract}

\maketitle

\section{Introduction}
Accurately predicting the collective behavior of multi-scale physical systems is a long-standing problem that requires the integrated modeling of the molecular-level interactions across multiple scales \cite{More_Anderson_Science_1972}. However, for systems without clear scale separation, there often exists no such a set of simple collective variables by which we can formulate the evolution in an analytic and self-determined way. One canonical example is coarse-grained molecular dynamics (CGMD). While the reduced degrees of freedom (DoFs) enable us to achieve a broader range of the spatio-temporal scale, the construction of truly reliable CG models remains highly non-trivial. A significant amount of work 
\cite{Torrie_Valleau_Umbrella_JCP_1977,Tuckerman_adiabatic_JCP_2002,Eric_TAMD_CPL_2006,Izvekov_Voth_JPC_2005,noid_multiscale_2008,Rudd_APS_1998,Lyubartsev_PRE_1995,Shell_JCP_2008,Kumar_Kollman_JCC_1992,Nielsen_JPCM_2004,Laio_Parrinello_PNAS_2002,Darve_Pohorille_JCP_2001}
(see also review 
\cite{noid_perspective_2013}), including recent machine learning (ML)-based approaches \cite{Stecher_Thomas_JCTC_2014, John_ST_JPCB_2017, Lemke_Tobias_JCTC_2017, PRL_DeePMD_2018, Zhang_DeePCG_JCP_2018}, have been devoted to constructing the conservative CG potential for retaining consistent static and thermodynamic properties.  However, accurate prediction of the CG dynamics further relies on faithfully modeling a memory term that represents the energy-dissipation processes arising from the unresolved DoFs; the governing equations generally become non-Markovian on the CG scale.  Moreover, such non-Markovian term often depends on the resolved variables in a complex way \cite{Satija_Makarov_JCP_2017,Luo_Xie_JPCB_2006,Best_Hummer_PNAS_2010,Plotkin_Wolynes_PRL_1998, Straus_Voth_JCP_1993, Morrone_Li_JPCB_2012,daldrop2017external} where the analytic formulation is generally unknown. Existing approaches often rely on empirical models such as Brownian motion \cite{Einstein_1905}, Langevin dynamics \cite{vankampen_book_2007}, and dissipative particle dynamics (DPD) \cite{Hoogerbrugge_SMH_1992, Espanol_SMO_1995}.  
Despite their broad applications, studies \cite{Lei_Cas_2010, hijon2010mori,Yoshimoto_Kinefuchi_PRE_2013} based on direct construction from full MD  show that the empirical (e.g., pairwise additive) forms can be insufficient to capture the state-dependent energy-dissipation processes due to the many-body and non-Markovian effects. Recent efforts \cite{lange2006collective, ceriotti2009langevin, baczewski2013numerical, Dav_Voth_JCP_2015, Lei_Li_PNAS_2016, Li_Darve_Kar_JCP_2017, russo2019deep, Jung_Hanke_JCTC_2017, Lee2019,ma2019coarse,ma2021transfer, Klippenstein_Vegt_JCP_2021,vroylandt2022likelihood, SheZ_JCP_2023, xie2023gle} model the memory term based on the generalized Langevin equation (GLE) and its variants (see also review \cite{klippenstein2021introducing}). While the velocity auto-correlation function (VACF) is often used as the target quantity for model parameterization, it 
is essentially a metric of the background dissipation under mean-field approximation. 
The homogeneous kernel overlooks the heterogeneity of the energy dissipation among the CG particles stemming from the many-body nature of the marginal probability density function of the CG variables. 
This limitation 
imposes a fundamental challenge for 
accurately modeling the local irreversible responses 
as well as the transport and diffusion processes on the collective scale.

This work aims to fill the gap with a new CG model that faithfully entails the state-dependent non-Markovian memory and the coherent noise. The model formulation can be loosely viewed as an extended dynamics of the CG variables joint with a set of non-Markovian features that embodies the many-body nature of the energy dissipation among the CG particles. Specifically, we treat each CG particle as an agent and seek a set of symmetry-preserving neural network (NN) representations that directly map its local environments 
to the non-Markovian friction interactions, and thereby circumvent the exhausting efforts of fitting the individual memory terms with a unified empirical form. Different from the ML-based potential model \cite{Zhang_DeePCG_JCP_2018}, the memory terms are represented by NNs in form of second-order tensors that strictly preserve the rotational symmetry and the positive-definite constraint.
Coherent noise can be introduced satisfying the second fluctuation-dissipation theorem and retaining consistent invariant distribution. 
Rather than matching the VACF, the model is trained based on the Mori-Zwanzig (MZ) projection formalism such that the effects of the unresolved interactions can be seamlessly inherited. We emphasize that the construction is not merely for mathematical rigor. Numerical results of a polymer molecule system show that the CG models with empirical memory forms are generally insufficient to capture heterogeneous inter-molecular dissipation that leads to inaccurate cross-correlation functions among the particles. Fortunately, the present model can reproduce both the auto- and cross-correlation functions. More importantly, it accurately predicts the challenging collective dynamics characterized by the hydrodynamic mode correlation and the van Hove function \cite{Van_Hove_Phys_Rev_1954} and shows the promise to predict the meso-scale transport and diffusion processes with molecular-level fidelity. 

\section{Methods}
Let us consider a full MD system consisting of $M$ molecules with a total number of $N$ atoms. The phase space vector is denoted by $\mb z = \left[\mb q, \mb p\right]$, where $\mb q, \mb p \in \mathbb{R}^{3N}$ represent the position and momentum vector, respectively. Given $\mb z(0) = \mb z_0$, the evolution follows $\mb z(t) = \rm{e}^{\mathcal{L}t} \mb z_0$, where $\mathcal{L}$ is the Liouville operator determined by the Hamiltonian $H(\mb z)$. The CG variables are defined by representing each molecule as a CG particle, i.e., 
$\phi(\mb z) = \left[\phi^Q (\mb z), \phi^P (\mb z)\right]$, where $\mb \phi^Q (\mb z) = \left[\mb Q_1, \mb Q_2, \cdots, \mb Q_M\right]$ and $\mb \phi^P (\mb z) = \left[\mb P_1, \mb P_2, \cdots, \mb P_M\right]$
represent the center of mass and the total momentum of individual molecules, respectively. $\mb Z(t)=[\mb Q(t), \mb P(t)]$ denote the map $\phi(\mb z(t))$ with $\mb z(0) = \mb z_0$. 
To construct the reduced model, we define the Zwanzig  projection operator as the conditional expectation with a fixed CG vector $\bm Z$, i.e., 
$\mathcal{P}_{\bm Z} f(\mb z) := \mathbb{E}[f(\mb z) \vert \phi(\mb z) = \bm Z ]$ under conditional density proportional to 
$\delta(\phi(\mb z) - \bm Z) \rm{e}^{-\beta H(\mb z)}$ and its orthogonal operator $\mathcal{Q}_{\bm Z} = \mb I - \mathcal{P}_{\bm Z}$.
Using Zwanzig's formalism \cite{Zwanzig1973}, the dynamics of $\mb Z(t)$ (see Appendix \ref{sec_SI:CG_dynamics}) can be written as 
\begin{equation}
\begin{split}
\dot{\mb Q} &= \mb{M}^{-1} \mb P \\
\dot{\mb P} &= -\nabla U(\mb Q) + \int_0^t \mb K(\mb Q(s), t-s) \mb V(s) \diff s + \mb R(t),
\end{split}
\label{eq:MZ_full}
\end{equation}
where $\mb M$ is the mass matrix and $\mb V = \mb{M}^{-1} \mb P$ is the velocity.  
$U(\mb Q)$ is the free energy under $\phi^{Q}(\mb z) \equiv \mb Q$. 
$\mb K(\mb Q, t) = \mathcal{P}_{\mb Z}[({\rm e}^{\mathcal{Q}_{\bm Z} \mathcal{L} t}\mathcal{Q}_{\bm Z} \mathcal{L}\mb P) (\mathcal{Q}_{\bm Z} \mathcal{L}\mb P)^T]$ is the memory representing the coupling between the CG and unresolved variables, and $\mb R(t)$ is the fluctuation force. 

Eq. \eqref{eq:MZ_full} provides the starting point to derive the various CG models. Direct evaluation of $\mb K(\mb Q, t)$ imposes a challenge as it relies on solving the full-dimensional orthogonal dynamics ${\rm e}^{\mathcal{Q}_{\bm Z} t}$. Further simplification $\mb K(\mb Q, t) \approx \theta(t)$ leads to the common GLE with a homogeneous kernel. Alternatively, the pairwise approximation $[\mb K(\mb Q, t)]_{ij} \approx \gamma (Q_{ij}) \delta(t)$ or $\gamma (Q_{ij})  \theta(t)$ 
leads to the standard DPD (M-DPD) and non-Markovian variants (NM-DPD), respectively. However, as shown below, such empirical forms are limited to capturing the state-dependence that turns out to be crucial for the dynamics on the collective scale, and motivates the present model retaining the many-body nature of $\mb K(\mb Q, t)$. 

To elaborate the essential idea, let us start with the Markovian approximation $\mb K(\mb Q, t) \approx -\bm \Gamma (\mb Q) \delta(t)$, where $\bm\Gamma(\mb Q) =  \bm\Xi(\mb Q) \bm\Xi(\mb Q)^T$ is the friction tensor preserving the semi-positive definite condition, 
and $\bm\Xi(\mb Q)$ needs to retain the translational, rotational, and permutational symmetry, i.e., 
\begin{equation}
\begin{split}
&\bm\Xi_{ij}(\mb Q_1 + \mb b, \cdots,\mb Q_M + \mb b) = \bm\Xi_{ij}(\mb Q_1, \cdots,\mb Q_M) \\ 
&\bm\Xi_{ij}(\mathcal{U}\mb Q_1, \cdots,\mathcal{U}\mb Q_M) = \mathcal{U}\bm\Xi_{ij}(\mb Q_1, \cdots,\mb Q_M) \mathcal{U}^T \\
&\bm\Xi_{\sigma(i)\sigma(j)}(\mb Q_{\sigma(1)}, \cdots,\mb Q_{\sigma(M)}) = \bm\Xi_{ij}(\mb Q_1, \cdots,\mb Q_M),
\end{split}
\label{eq:symmetry_constraint}
\end{equation}
where $\bm\Xi_{ij} \in \mathbb{R}^{3\times 3}$ represents the friction contribution of $j\mhyphen$th particle on $i\mhyphen$th particle, $\mb b\in \mathbb{R}^3$ is a translation vector, $\mathcal{U}$ is a unitary matrix, and $\sigma(\cdot)$ is a permutation function. 

To inherit the many-body interactions, we map the local environment of each CG particle into a set of generalized coordinates, 
i.e., $\hat{\mb Q}_i^k = \mb Q_i + \sum_{l\in\mathcal{N}_i} f^k(Q_{il}) \mb Q_{il}$, where $\mb f: \mathbb{R} \to \mathbb{R}^K$ is an encoder function to be learned, and $\mathcal{N}_i = \{l \vert Q_{il} < r_c\}$ is the neighboring index set of the $i\mhyphen$th particle within a cut-off distance $r_c$.  Accordingly, $\hat{\mb Q}_{ij} \in \mathbb{R}^{3\times K}$
represents a set of features that encode the inter-molecular configurations beyond the pairwise approximation.  The $k\mhyphen$th column ${\hat{\mb Q}_{ij}}^{k} = {\hat{\mb Q}_i}^{k} - {\hat{\mb Q}_j}^{k}$ preserves the translational and permutational invariance, by which we represent $\bm \Xi_{ij}$ by 
\begin{equation}
\begin{split}
\bm\Xi_{ij} = \sum_{k=1}^K h_k(\hat{\mb Q}_{ij}^T \hat{\mb Q}_{ij} ) \hat{\mb Q}_{ij}^k \otimes \hat{\mb Q}_{ij}^k + h_0(\hat{\mb Q}_{ij}^T \hat{\mb Q}_{ij} ) \mb I
\end{split}
\label{eq:Xi_markovian}
\end{equation}
where $h: \mathbb{R}^{K\times K}\to \mathbb{R}^{K+1}$ are encoder functions which will be represented by NNs. For $i=j$, we have $\bm\Xi_{ii} = -\sum_{j\in \mathcal{N}_i} \bm \Xi_{ij}$ based on the Newton's third law. We refer to Appendix \ref{sec_SI:NN} for the proof of the symmetry constraint \eqref{eq:symmetry_constraint}.

Eq. \eqref{eq:Xi_markovian} entails the state-dependency of the memory term $\mb K(\mb Z,t)$ under the Markovian approximation. To incorporate the non-Markovian effect, we embed the memory term within an extended Markovian dynamics \cite{ceriotti2009langevin} (see also Ref. \cite{SheZ_JCP_2023}). Specifically, we seek a set of non-Markovian features  $\bm \zeta := \left[\bm\zeta_{1}, \bm\zeta_{2}, \cdots, \bm\zeta_{n}\right]$, and construct the joint dynamics of $\left [\mb Z, \bm \zeta\right]$ by imposing the many-body form of the friction tensor between $\mb P$ and  $\bm \zeta$, i.e.,
\begin{equation}
\begin{split}
 \dot{\mb Q} &= \mb M^{-1}\mb P \\ 
\dot{\mb P} &= -\nabla U(\mb Q) + \bm \Xi(\mb Q)\bm\zeta \\
\dot{\bm\zeta} &= - \bm \Xi(\mb Q)^T \mb V - \bm\Lambda \bm\zeta + \bm \xi(t),
\end{split}
\label{eq:CGMD}
\end{equation}
where $\bm\Xi = \left[\bm \Xi^1 \bm \Xi^2 \cdots \bm \Xi^n\right]$ and each sub-matrix takes the form \eqref{eq:Xi_markovian} constructed by $\{\mb f^{i}(\cdot), \mb h^{i}(\cdot)\}_{i=1}^n$ respectively. $\bm\Lambda = \hat{\bm \Lambda}\otimes \mb I$ represents the coupling among $n$ features, where   $\mb I \in \mathbb{R}^{3N\times 3N}$ is the identity matrix and $\hat{\bm \Lambda} \in \mathbb{R}^{n\times n}$needs to satisfy the Lyapunov stability condition $\hat{\bm \Lambda} + \hat{\bm \Lambda}^T \ge 0$. 
Therefore, we write $\hat{\bm \Lambda} = \hat{\mb L}\hat{\mb L}^T + \hat{\mb L}^a$, where $\hat{\mb L}$ is a lower triangular matrix and $\hat{\mb L}^a$ is an anti-symmetry matrix which will be determined later.
By choosing the white noise $\bm\xi(t)$ following
\begin{equation}
\left\langle  \bm \xi(t) \bm \xi(t') \right\rangle = \beta^{-1}(\bm\Lambda + \bm\Lambda^T) \delta(t-t'),  
\end{equation}
we can show that the reduced model \eqref{eq:CGMD} retains the consistent invariant distribution, i.e.,  $\rho(\mb Q, \mb P, \bm\xi) \propto \exp[{-\beta(U(\mb Q) + \mb P^T \mb M^{-1} \mb P/2 + \bm \zeta^T \bm\zeta/2})]$ 
(see proof in Appendix \ref{sec_SI:invariant}).

Eq. \eqref{eq:CGMD} departs from the common CG models by retaining both the heterogeneity and non-Markovianity of the energy dissipation process. 
Rather than matching the mean-field metrics such as the homogeneous VACF, 
we learn the embedded memory 
$\bm\Xi(\mb Q(t)){\rm e}^{\bm\Lambda(t-s)}\bm\Xi(\mb Q(s))^T$ based on the MZ form. 
However, directly solving the orthogonal dynamics ${\rm e}^{\mathcal{Q}_{\bm Z}Lt}$ is computationally intractable. Alternatively, we introduce the constrained dynamics $\tilde{\mb z}(t)= {\rm e}^{\mathcal{R}t} \mb z(0)$ following Ref. \cite{hijon2010mori}. Based on the observation $\mathcal{P}\mathcal{Q} = \mathcal{P}\mathcal{R}\equiv 0$, we sample the MZ form from $\tilde{\mb z}(t)$, i.e., $\mb K_{MZ}(\bm Z, t) = \mathcal{P}_{\bm Z}[({\rm e}^{\mathcal{R} t}\mathcal{Q}_{\bm Z} \mathcal{L}\mb P) (\mathcal{Q}_{\bm Z} \mathcal{L}\mb P)^T]$ and the memory of the CG model reduces to $\mb K_{CG}(\bm Z, t) = \bm\Xi(\mb Q){\rm e}^{\bm\Lambda t}\bm\Xi(\mb Q)^T$. 
This enables us to train the CG models in terms of the encoders  $\{\mb f^{i}(\cdot), \mb h^{i}(\cdot)\}_{i=1}^n$ and matrices $\hat{\mb L}$ and $\hat{\mb L}^a$ by minimizing the empirical loss 
\begin{equation}
L = \sum_{l=1}^{N_s} \sum_{j=1}^{N_t} \left\Vert  \mb K_{CG}(\mb Z^{(l)}, t_j) -  \mb K_{MZ}(\mb Z^{(l)}, t_j)\right\Vert^2,  
\end{equation}
where $l$ represents the different CG configurations (see Appendix \ref{sec_SI:training} for details in training). 

\section{Numerical results}
To demonstrate the accuracy of the present model, we consider a full
micro-scale model of a star-shaped polymer melt system similar to Ref. \cite{hijon2010mori}, where each molecule consists of $73$ atoms. The atomistic interactions are modeled by the Weeks-Chandler-Anderse potential and the Hookean bond potential.
The full system consists of $486$ molecules in a cubic domain $90\times90\times90$ with periodic boundary conditions. The Nos\'{e}-Hoover thermostat \cite{Nose_Mol_Phys_1984,Hoover1985} is employed to equilibrate the system with $k_BT =  4.0$ and micro-canonical ensemble simulation is conducted during the production stage (see Appendix \ref{sec_SI:MD}) for details). Below we compare different dynamic properties predicted by the full MD and the various CG models.  For fair comparisons, 
we use the same CG potential $U(\mb Q)$ constructed by the DeePCG scheme \cite{Zhang_DeePCG_JCP_2018} for all the CG models; the differences in dynamic properties solely arise from the different formulations of the memory term.

\begin{figure}[htp]
    \centering
    \includegraphics[width=0.95\linewidth]{./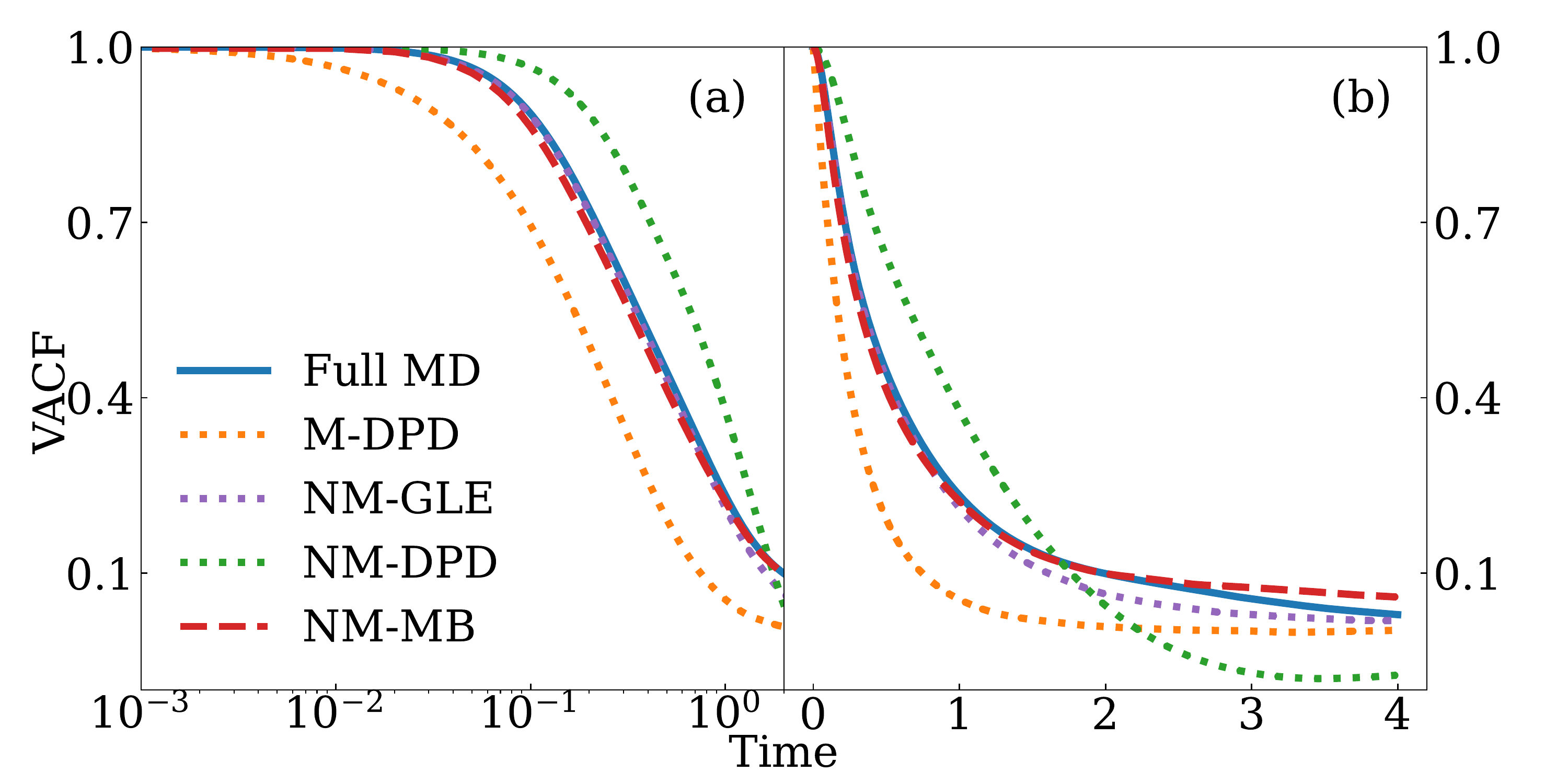}
    \caption{The VACF of the full MD and CG models with various memory formulations in (a) semi-log scale (b) original scale. ``M'' and ``NM'' represent Markovian and Non-Markovian; GLE, DPD, and MB represent state-independent, pairwise, and the present (NM-MB) model retaining the many-body  effects, respectively.}
    \label{fig:vacf}
\end{figure}
Let us start with the VACF which has been broadly used in CG model parameterization and validation. As shown in Fig. \ref{fig:vacf}, the predictions from the present model (NM-MB) show good agreement with the full MD results.
In contrast, the CG model with the memory term represented by the pairwise decomposition and Markovian approximation (i.e., the standard M-DPD form) yields apparent deviations. The form of the pairwise decomposition with non-Markovian approximation (NM-DPD) shows improvement at a short time scale but exhibits large deviations at an intermediate scale. Such limitations indicate pronounced many-body effects in the energy dissipation among the CG particles. Alternatively, if we set the VACF as the target quantity, we can parameterize the empirical model such as GLE by matching the VACF predicted by the full MD. Indeed, the prediction from the constructed GLE recovers the MD results. However, as shown below, this form over-simplifies the heterogeneity of the memory term and leads to inaccurate predictions on the collective scales.

\begin{figure}[htpb]
    \centering
    \includegraphics[width=0.95\linewidth]{./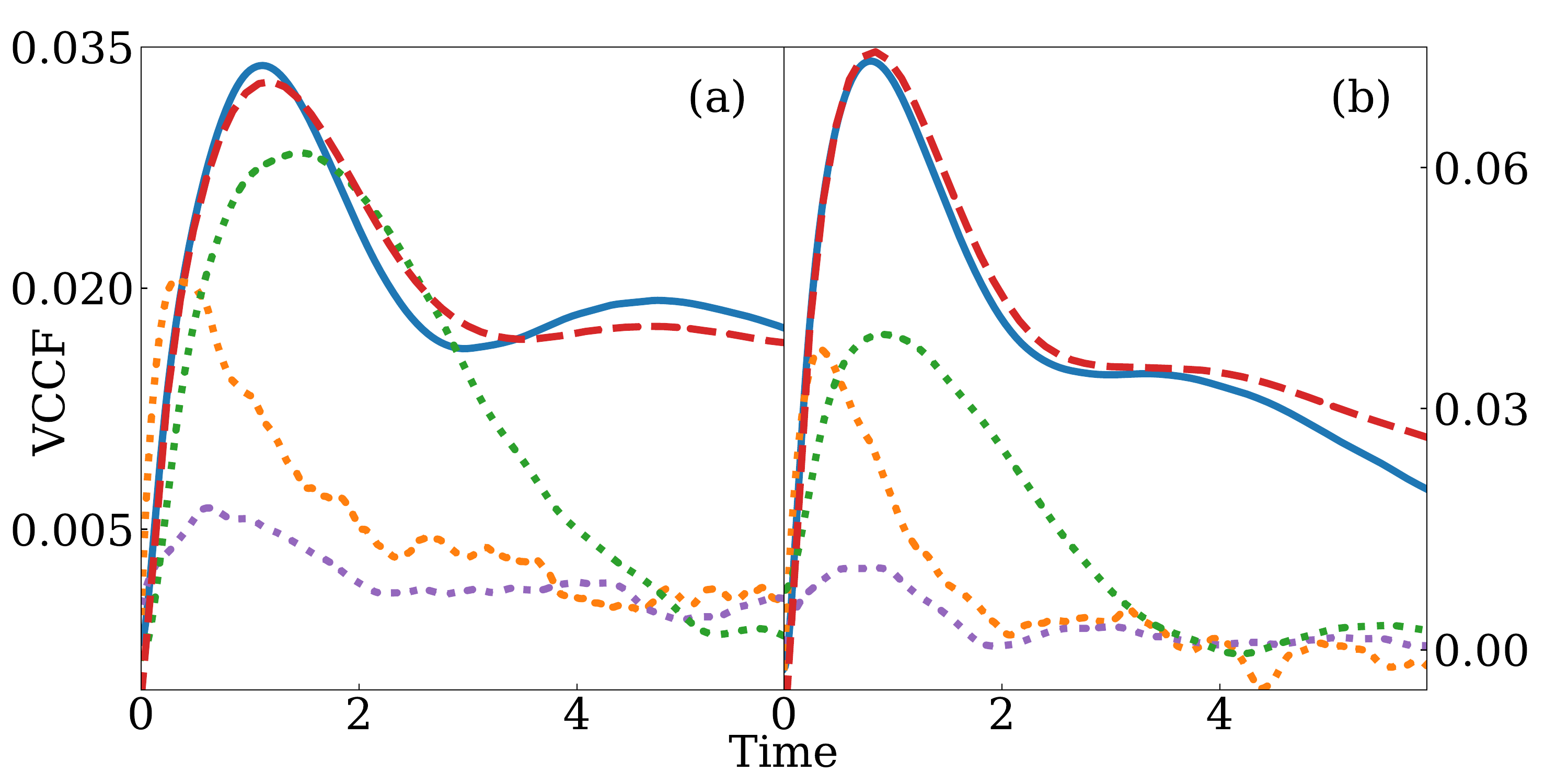}
    \caption{The VCCF $C^{xx}(t; r_0)$ predicted by the full MD and different CG models with initial distance (a) $10<r_0<11$ and (b) $14<r_0<15$. Same line legend as Fig. \ref{fig:vacf}.}
    \label{fig:vccf}
\end{figure}

Fig. \ref{fig:vccf} 
shows the velocity cross-correlation function (VCCF) between two CG particles, i.e., $C^{xx}(t; r_0) = \mathbb{E}[\mb V_i(0)\cdot\mb V_j(t) \vert Q_{ij}(0) = r_0]$, where $r_0$ represents the initial distance. Similar to VACF, the present model (NM-MB) yields good agreement with the full MD results. However, the predictions from other empirical models, including the GLE form, show apparent deviations. Such limitations arise from the inconsistent representation of the local energy dissipation and 
can be understood as following. The VACF represents the energy dissipation on each particle as a homogeneous background heat bath; it is essentially a mean-field metric and can not characterize the dissipative interactions among the particles. Hence, the reduced models that only recover the VACF could be insufficient to retain the consistent local momentum transport and the correlations among the particles.
 
\begin{figure}[htpb]
    \centering
    \includegraphics[width=0.95\linewidth]{./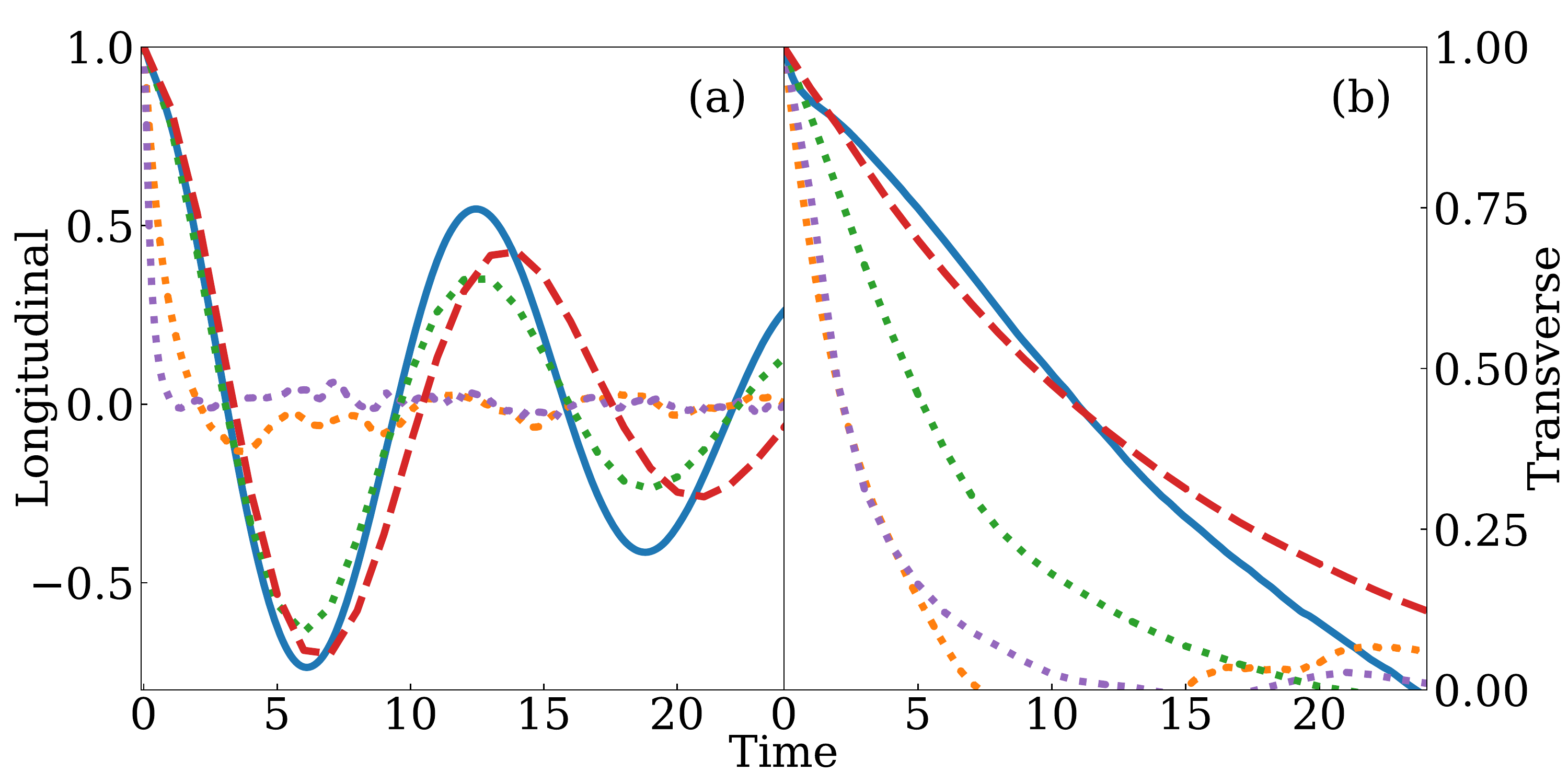}
    \caption{(a) Longitudinal and (b) Transverse hydrodynamic modes predicted by MD and different CG models. Same line legend as Fig. \ref{fig:vacf}.}
    \label{fig:hydromode}
\end{figure} 
Furthermore, the various empirical models for local energy dissipations can lead to fundamentally different transport processes on the collective scale. Fig. \ref{fig:hydromode} 
 shows the normalized correlations of the longitudinal and transverse hydrodynamic modes \cite{Theory_simple_liquids_book_2007}, i.e., $C_L(t) = \langle \tilde{u}_1(t)\tilde{u}_1(0)\rangle$ and $C_T(t) = \langle \tilde{u}_2(t)\tilde{u}_2(0)\rangle$, where $\tilde{\mb u} = 1/M\sum_{j=1}^M \mb V_j {\rm e}^{i\mb k \cdot \mb Q_j}$, $\mb k$ is the wave vector, and the subscripts $1$ and $2$ represent the direction parallel and perpendicular to $\mb k$, respectively. 
Similar to the VCCF, the prediction from the present model (NM-MB) agrees well with the MD results while other models show apparent deviations. In particular, the prediction from the GLE model shows strong over-damping due to the ignorance of the inter-molecule dissipations.

\begin{figure}[htp]
    \centering
    \includegraphics[width=0.95\linewidth]{./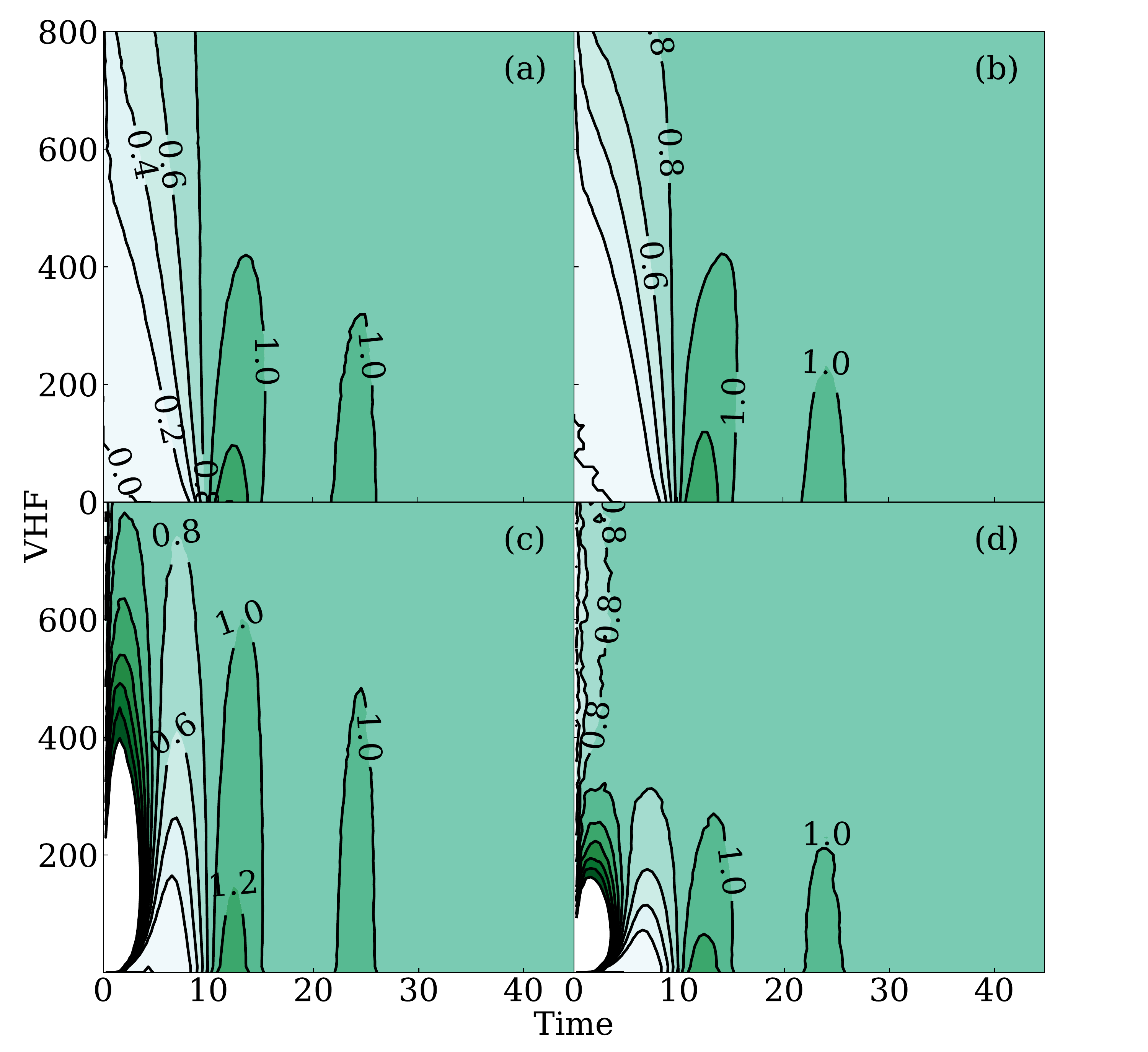}
    \caption{The van Hove function predicted by (a) full MD (b) the present NM-MB model (c) NM-DPD model (d) GLE model.}
    \label{fig:vhf}
\end{figure}
Finally, we examine the diffusion process on the collective scale. Fig. \ref{fig:vhf} shows the van Hove function that characterizes the evolution of the inter-particle structural correlation defined by $G(r, t) \propto \frac{1}{M^2} \sum_{j\neq i}^M \delta(\Vert \mb Q_i(t) - \mb Q_j(0)\Vert - r)$. At $t=0$, $G(r, t)$ reduces to the standard radial distribution function where all the CG models can recover such initial conditions. However, for $t > 0$, predictions from the models with the pairwise decomposition (NM-DPD) and the GLE form show apparent deviations. Specifically, at an early stage near $t = 50$, the neighboring particles begin to artificially jump into the region near the reference particle, violating the fluid-structure thereafter. In contrast, the present model (NM-MB) shows consistent predictions of the structure evolution over a long period until $t= 1000$, when the initial fluid structure ultimately diffuses into a homogeneous state. 

\section{Summary}
To conclude, we developed a CG model that faithfully accounts for the broadly overlooked many-body nature of the non-Markovian memory term. We show that retaining the heterogeneity and the strong correlation of the local energy dissipation is crucial for accurately predicting the cross-correlation among the CG particles, which, however, can not be fully characterized by the mean-field metrics such as VACF. More importantly, the memory form representing the inter-molecule energy dissipations may play a profound role in the transport and diffusion processes on the collective scale. In particular, the present model accurately predicts the hydrodynamic mode correlation and the van Hove function where empirical forms show limitations,
and therefore, shows the promise to study challenging problems relevant to the meso-scale transition and synthesis processes. 

\begin{acknowledgements}
The work is supported in part by the National Science Foundation under Grant DMS-2110981 and 
the ACCESS program through allocation MTH210005.
\end{acknowledgements}

\appendix
\section{Dynamics of the coarse-grained variables}
\label{sec_SI:CG_dynamics}
We consider a full MD system consisting of $M$ molecules with a total number of $N$ atoms. The phase space vector is denoted by $\mb z = \left[\mb q, \mb p\right]$, where $\mb q\in \mathbb{R}^{3N}$ and $\mb p \in \mathbb{R}^{3N}$ represent the position and momentum vector, respectively. The coarse-grained (CG) variables are defined by representing each molecule as a CG particle, i.e., $\phi(\mb z) = \left[\bm\phi^{Q}(\mb q), \bm\phi^{P}(\mb q)\right]$, where $\bm\phi^{Q} = \left[\mb Q_1, \mb Q_2, \cdots \mb Q_M\right]$ and $\bm\phi^{P} = \left[\mb P_1, \mb P_2, \cdots \mb P_M\right]$ represent the center of mass (COM) and the total momentum of the individual molecules. Let $\mb Z(t) = [\mb Q(t), \mb P(t)]$ denote the map $\phi(\mb z(t))$ with $\mb z(0) = \mb z_0$. 
Using the Koopman operator \cite{Koopman315}, $\mb Z(t)$ can be mapped from the initial values, i.e., 
\begin{equation}
\mb Z(t) = {\rm e}^{\mathcal{L} t}\mb Z(0),
\label{eq:Z_koopman}
\end{equation}
where $\mathcal{L}$ is the Liouville operator determined by the full-model Hamiltonian $H(\mb z)$. Below we derive the reduced model by choosing CG variables $\mb Z$ as a linear mapping of the full phase-space 
vector $\mb z$ (see also Ref. \cite{KinHyo07}) and we refer to Refs. \cite{hijon2010mori, Darve_PNAS_2009} for discussions of the more general cases.  

Following Zwanzig's approach, we define a projection operator as the conditional expectation with a fixed CG vector $\bm Z$, i.e., $\mathcal{P}_{\bm Z} f(\mb z) := \int \delta(\bm\phi(\mb z) - \bm Z) \rho_0(\mb z) f(\mb z) \diff \mb z / \Omega(\bm Z)$, where $\rho_0(\mb z) \propto \rm{e}^{-\beta H(\mb z)}$ represents the equilibrium density function and 
$\Omega(\bm Z) =  \int  \delta(\bm\phi(\mb z) - \bm Z) \rho_0(\mb z) \diff \mb z$.
Also, we define an orthogonal operator $\mathcal{Q}_{\bm Z} = \mb I - \mathcal{P}_{\bm Z}$. Using Eq. \eqref{eq:Z_koopman}, we have $\dot{\mb Z}(t) =  {\rm e}^{\mathcal{L} t}\mathcal{P}_{\bm Z} \mathcal{L} \mb Z(0) +  {\rm e}^{\mathcal{L} t} \mathcal{Q}_{\bm Z} \mathcal{L} \mb Z(0)$. In particular, we choose $\bm Z = \mb Z(0)$. Using the Duhamel-Dyson identity, we can write the dynamics of $\mb Z(t)$ as
\begin{equation}
\dot{\mb Z}(t) = {\rm e}^{\mathcal{L}t}\mathcal{P}_{\bm Z}\mathcal{L}\mb Z(0) + \int_0^t \diff s {\rm e}^{\mathcal{L}(t-s)}\mathcal{P}_{\bm Z} \mathcal{L} {\rm e}^{\mathcal{Q}_{\bm Z}\mathcal{L}s} \mathcal{Q}_{\bm Z}\mathcal{L}\mb Z(0)   + {\rm e}^{\mathcal{Q}_{\bm Z}\mathcal{L}t} \mathcal{Q}_{\bm Z}\mathcal{L}\mb Z(0). 
\label{eq:Z_MZ_full}
\end{equation}

Let us start with the mean-field term $\mathcal{P}_{\bm Z}\mathcal{L}\mb Z(0)$. For the present study, the CG variables are linear functions of $\mb z$. Therefore, we have $\mathcal{P}_{\bm Z} \mathcal{L} \mb Q = \mathcal{L} \mb Q = \mb M^{-1} \mb P$, i.e., $\mathcal{Q}_{\bm Z} \mathcal{L} \mb Q \equiv 0$. For $\mathcal{P}_{\bm Z} \mathcal{L} \mb P$ associated with the $i\mhyphen$th CG particle, we have
\begin{equation}
\begin{split}
\mathcal{P}_{\bm Z} \mathcal{L} \mb P_i &= \int  \delta(\bm\phi(\mb z) - \bm Z) \rho_0(\mb z) \mathcal{L} \mb P_i \diff \mb z / \Omega(\bm Z)  \\
&= \int  \delta(\bm\phi(\mb z) - \bm Z) \rho_0(\mb z) (-\sum_{i\in \mathcal{N}_i} \nabla_{\mb q_i} H(\mb z)) \diff \mb z / \Omega(\bm Z) \\ 
&= \int  \delta(\bm\phi(\mb z) - \bm Z) (\beta^{-1} \sum_{i\in \mathcal{N}_i} \nabla_{\mb q_i}) \rho_0(\mb z)  \diff \mb z / \Omega(\bm Z) \\
&= \beta^{-1} \nabla_{\bm Q_i} \int  \delta (\bm\phi^{Q}(\mb q) - \bm Q) \rho_0(\mb q)  \diff \mb q / \int  \delta (\bm\phi^{Q}(\mb q) - \bm Q)  \rho_0(\mb q) \diff \mb q \\
&= -\nabla_{\bm Q_i} U(\bm Q),
\end{split}
\label{eq:free_energy_derivation}
\end{equation}
where $\mathcal{N}_i$ represents the index set of the atoms that belongs to the $i\mhyphen$th molecule, and $U(\bm Q)$ represents the free energy defined by $U(\bm Q) = -\beta^{-1} \ln \left [\int  \delta (\bm\phi^{Q}(\mb q) - \bm Q)  \rho_0(\mb q) \diff \mb q \right]$. 

For the memory term $\mathcal{P}_{\bm Z} \mathcal{L} {\rm e}^{\mathcal{Q}_{\bm Z}\mathcal{L}s} \mathcal{Q}_{\bm Z}\mathcal{L}\mb P$ associated with the $i\mhyphen$th CG particle, we have
\begin{equation}
\begin{split}
\mathcal{P}_{\bm Z} \mathcal{L} {\rm e}^{\mathcal{Q}_{\bm Z}\mathcal{L}s} \mathcal{Q}_{\bm Z}\mathcal{L}\mb P_i &= \int \rho_0(\mb z) \delta(\bm\phi(\mb z) - \bm Z)   \mathcal{L} {\rm e}^{\mathcal{Q}_{\bm Z}\mathcal{L}s} \mathcal{Q}_{\bm Z}\mathcal{L}\mb P_i   \diff \mb z / \Omega(\bm Z) \\
&=  \int \rho_0(\mb z) (\mathcal{L} \phi(\mb z) \cdot \nabla_{\bm Z}) \delta(\bm\phi(\mb z) - \bm Z)   {\rm e}^{\mathcal{Q}_{\bm Z}\mathcal{L}s} \mathcal{Q}_{\bm Z}\mathcal{L}\mb P_i   \diff \mb z / \Omega(\bm Z) \\
&=  \int \rho_0(\mb z) (\mathcal{Q}_{\bm Z} \mathcal{L} \mb P \cdot \nabla_{\bm P}) \delta(\bm\phi(\mb z) - \bm Z)   {\rm e}^{\mathcal{Q}_{\bm Z}\mathcal{L}s} \mathcal{Q}_{\bm Z}\mathcal{L}\mb P_i   \diff \mb z / \Omega(\bm Z) ~({\rm by}~ \mathcal{Q}_{\bm Z}\mathcal{L}\mb Q\equiv 0)\\
&=  \nabla_{\bm P} \cdot \int \rho_0(\mb z) \delta(\bm\phi(\mb z)   - \bm Z) (\mathcal{Q}_{\bm Z}\mathcal{L} \mb P ) \otimes {\rm e}^{\mathcal{Q}_{\bm Z}\mathcal{L}s} \mathcal{Q}_{\bm Z}\mathcal{L}\mb P_i   \diff \mb z / \Omega(\bm Z)\\ 
&=  \nabla_{\bm P} \cdot \underbrace{\left( \int \rho_0(\mb z) \delta(\bm\phi(\mb z)   - \bm Z) (\mathcal{Q}_{\bm Z} \mathcal{L} \mb P ) \otimes {\rm e}^{\mathcal{Q}_{\bm Z}\mathcal{L}s} \mathcal{Q}_{\bm Z}\mathcal{L}\mb P_i   \diff \mb z / \Omega(\bm Z) \right)}_{\tilde{\mb K}_{i,}(\bm Z, s)}  \\
&- \tilde{\mb K}_{i,}(\bm Z, s) \cdot \nabla_{\bm P} \left(1/\Omega(\bm Z)\right) \Omega(\bm Z). 
\end{split}
\label{eq:memory_derivation}
\end{equation}
Furthermore, we take the assumption that the memory kernel only depends on the positions of the CG particles $\bm Q$, i.e., $\nabla_{\bm P} \cdot \tilde{\mb K}(\bm Z, s) \equiv 0$. Also, similar to the derivation in Eq. \eqref{eq:free_energy_derivation}, we note that 
\begin{equation}
\Omega(\bm Z) \propto   \int  \delta (\bm\phi^{Q}(\mb q) - \bm Q) \rho_0(\mb q)  
\delta (\bm\phi^{P}(\mb q) - \bm P) e^{-\beta \mb P^T \mb M^{-1} \mb P/2} \diff \mb z
\propto e^{-\beta \bm P^T \mb M^{-1} \bm P/2}.
\end{equation}
Therefore, Eq. \eqref{eq:memory_derivation} can be further simplified as
\begin{equation}
\mathcal{P}_{\bm Z} \mathcal{L} {\rm e}^{\mathcal{Q}_{\bm Z}\mathcal{L}s} \mathcal{Q}_{\bm Z}\mathcal{L}\mb P_i =  -\beta \tilde{\mb K}_{i,}(\bm Q, s) \cdot \mb M^{-1} \bm P. 
\label{eq:memory_derivation_simple}
\end{equation}

With Eqs. \eqref{eq:free_energy_derivation} \eqref{eq:memory_derivation_simple}, we can show that the dynamics of $\mb Z = [\mb Q, \mb P]$ can be written as 
\begin{equation}
\begin{split}
\dot{\mb Q} &= \mb M^{-1} \mb P \\ 
\dot{\mb P} &= - \nabla U(\mb Q) - \int_0^t \mb K(\mb Q(t-s), s) \mb V(t-s) \diff s + \mb R(t),
\end{split}    
\end{equation}
where $\mb K(\mb Q, s) = \beta \tilde{\mb K}(\mb Q, s)$ and $\mb R(t) = {\rm e}^{\mathcal{Q}_{\bm Z}\mathcal{L}t} \mathcal{Q}_{\bm Z}\mathcal{L}\mb Z(0)$ is modeled as a random process representing the different initial condition $\mb z_0$ with $\phi(\mb z_0) = \bm Z$.

\section{The micro-scale model of the polymer melt system}
\label{sec_SI:MD}
We consider the micro-scale model of a star-shaped polymer melt system 
similar to Ref. \cite{hijon2010mori}. Each polymer molecule consists of a ``center'' atom connected by
$12$ arms with $6$ atoms per arm. The potential function is governed by the pairwise and bond interactions, i.e.,
\begin{equation}
V(\mb q) = \sum_{i\neq j} V_{p}(q_{ij}) + \sum_k V_{b}(l_k),
\label{eq:MD_potential}
\end{equation}
where $V_{p}$ is the pairwise interaction between both the intra- and inter-molecular atoms except the bonded pairs. $q_{ij} = \Vert \mb q_i - \mb q_j\Vert$ 
is the distance between the $i\mhyphen$th and $j\mhyphen$th atoms. $V_{p}$ takes the form of the Lennard–Jones potential with cut-off $r_c$, i.e., 
\begin{equation}
V_{p}(r) = \begin{cases} V_{\rm LJ}(r) - V_{\rm LJ}(r_c), ~r < r_c  \\
0, ~r \ge r_c 
\end{cases} \quad
\quad V_{\rm LJ}(r) = 4\epsilon \left[\left(\frac{\sigma}{r}\right)^{12} -  \left(\frac{\sigma}{r}\right)^6\right],
\label{eq:LJpotential}
\end{equation}
where $\epsilon = 1.0$ is the dispersion energy and $\sigma = 2.415$ is the hardcore distance. Also we choose $r_c = 2^{1/6}\sigma$ so that $V_p$ recovers the Weeks-Chandler-Andersen potential. 
$V_{b}$ is the bond interaction between the neighboring  particles of each polymer arm and $l_k$ is the length of the $k\mhyphen$th bond. 
The bond potential $V_b$ is chosen to be the harmonic potential, i.e.,
\begin{equation}
V_{b}(l) = \frac{1}{2}k_s (l - l_0)^2, 
\label{eq:harmonic}
\end{equation}
where $k_s = 1.714$ and $l_0 = 1.615$ represent the elastic coefficient and the equilibrium length $l_0$, respectively. The atom mass is chosen to be unity. The full system consists of $N = 486$ polymer molecules in a cubic domain $90\times90\times90$ with periodic boundary condition imposed along each direction. The Nos\'{e}-Hoover thermostat is employed to equilibrate the system with $k_BT =  4.0$ and micro-canonical ensemble simulation is conducted during the production stage.

\section{Invariant density function of the CG model}
\label{sec_SI:invariant}
The reduced model takes the following form 
\begin{equation}
\begin{split}
\dot{\mb Q} &= \mb M^{-1}\mb P \\
\dot{\mb P} &= -\nabla U(\mb Q) + \bm \Xi(\mb Q)\bm\zeta \\
\dot{\bm\zeta} &= - \bm \Xi(\mb Q)^T \mb V - \bm\Lambda \bm\zeta + \bm \xi(t), 
\end{split}
\label{eq:CGMD_app}
\end{equation}
where $\bm\Xi = \left[\bm \Xi^1 \bm \Xi^2 \cdots \bm \Xi^n\right]$ represents a set of non-Markovian features. It resembles the extended dynamics for the GLE proposed in Ref. \cite{SheZ_JCP_2023} except that the coupling between $\mb P$ and the features $\bm \zeta$ are represented by the state-dependent friction tensor $\bm \Xi(\mb Q)$ retaining the many-body nature. By properly choosing the white noise $\bm\xi(t)$, we can show that model \eqref{eq:CGMD_app} retains the invariant density function consistent with the full MD model. 
\begin{proposition} 
By choosing the white noise $\bm\xi(t)$ following
\begin{equation}
\left\langle  \bm \xi(t) \bm \xi(t') \right\rangle = \beta^{-1}(\bm\Lambda + \bm\Lambda^T) \delta(t-t'),
\end{equation}
Model \eqref{eq:CGMD_app} retains the consistent invariant distribution 
\begin{equation}
\rho_{\rm eq}(\mb Q, \mb P, \bm\xi) \propto \exp[{-\beta(U(\mb Q) + \mb P^T \mb M^{-1} \mb P/2 + \bm \zeta^T \bm\zeta/2})]
\label{eq:rho_eq_app}
\end{equation}
\end{proposition}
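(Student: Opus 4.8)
The plan is to verify that $\rho_{\rm eq}$ in \eqref{eq:rho_eq_app} is a stationary solution of the Fokker--Planck equation associated with the It\^o system \eqref{eq:CGMD_app}, i.e. $\mathcal{F}\rho_{\rm eq}=0$ for the Fokker--Planck operator $\mathcal{F}$. Since the noise enters only the $\bm\zeta$-equation and $\bm\Lambda=\hat{\bm\Lambda}\otimes\mb I$ is a constant matrix, the diffusion block is the constant matrix $\beta^{-1}(\bm\Lambda+\bm\Lambda^T)$ and $\mathcal{F}\rho=-\nabla\cdot(\mb b\,\rho)+\beta^{-1}\nabla_\zeta\cdot(\bm\Lambda_s\nabla_\zeta\rho)$, where $\mb b$ is the drift of \eqref{eq:CGMD_app} and $\bm\Lambda_s:=\frac12(\bm\Lambda+\bm\Lambda^T)=\hat{\mb L}\hat{\mb L}^T\otimes\mb I$. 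Writing $E:=U(\mb Q)+\frac12\mb P^T\mb M^{-1}\mb P+\frac12\bm\zeta^T\bm\zeta$ so that $\rho_{\rm eq}\propto{\rm e}^{-\beta E}$, I first record the identities $\nabla_Q\rho_{\rm eq}=-\beta(\nabla U)\rho_{\rm eq}$, $\nabla_P\rho_{\rm eq}=-\beta\mb V\rho_{\rm eq}$, and $\nabla_\zeta\rho_{\rm eq}=-\beta\bm\zeta\,\rho_{\rm eq}$, which let me trade each gradient of $\rho_{\rm eq}$ for the conjugate variable.

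Next I would split $\mathcal{F}=\mathcal{F}_c+\mathcal{F}_d$ into a conservative (reversible) part and a dissipative (irreversible) part, assigning the skew piece $\bm\Lambda_a:=\hat{\mb L}^a\otimes\mb I$ of $\bm\Lambda$ to the former and the symmetric piece $\bm\Lambda_s$ to the latter. The conservative drift is $\mb b_c=(\mb M^{-1}\mb P,\,-\nabla U+\bm\Xi\bm\zeta,\,-\bm\Xi^T\mb V-\bm\Lambda_a\bm\zeta)$, and I claim $\mathcal{F}_c\rho_{\rm eq}=-\nabla\cdot(\mb b_c\rho_{\rm eq})=0$. This rests on two facts. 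First, $\mb b_c$ is divergence-free: the $\mb Q$- and $\mb P$-blocks are independent of $\mb Q$ and $\mb P$ respectively (recall $\bm\Xi=\bm\Xi(\mb Q)$), and $\nabla_\zeta\cdot(\bm\Lambda_a\bm\zeta)=\Tr(\bm\Lambda_a)=0$ since $\bm\Lambda_a$ is antisymmetric. Second, $\mb b_c$ conserves $E$, i.e. $\mb b_c\cdot\nabla E=0$: the $\mb V\cdot\nabla U$ and $-\nabla U\cdot\mb V$ contributions cancel, the two couplings cancel since $(\bm\Xi\bm\zeta)\cdot\mb V=\mb V^T\bm\Xi\bm\zeta=(\bm\Xi^T\mb V)\cdot\bm\zeta$ enter with opposite signs, and $\bm\zeta^T\bm\Lambda_a\bm\zeta=0$. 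Combining, $-\nabla\cdot(\mb b_c\rho_{\rm eq})=-(\nabla\cdot\mb b_c)\rho_{\rm eq}+\beta(\mb b_c\cdot\nabla E)\rho_{\rm eq}=0$.

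It then remains to treat the dissipative part, which acts only on $\bm\zeta$ through the drift $-\bm\Lambda_s\bm\zeta$ and the diffusion $\beta^{-1}\bm\Lambda_s$. Collecting these into a single probability flux gives $\mathcal{F}_d\rho=\nabla_\zeta\cdot\left[\bm\Lambda_s\big(\bm\zeta\,\rho+\beta^{-1}\nabla_\zeta\rho\big)\right]$. Substituting $\nabla_\zeta\rho_{\rm eq}=-\beta\bm\zeta\,\rho_{\rm eq}$ makes the bracket vanish identically, so $\mathcal{F}_d\rho_{\rm eq}=0$; together with the previous step this yields $\mathcal{F}\rho_{\rm eq}=0$, as claimed. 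This final cancellation is precisely the second fluctuation--dissipation relation, in that the same matrix $\bm\Lambda_s$ governs both the damping of $\bm\zeta$ and the noise covariance.

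I expect the main obstacle to be organizational rather than analytical: the delicate points are (i) confirming that the $\mb Q$-dependence of the many-body tensor $\bm\Xi(\mb Q)$ does not contribute to $\nabla\cdot\mb b_c$ (it does not, since $\bm\Xi$ appears only in the $\mb P$- and $\bm\zeta$-blocks, which are differentiated against $\mb P$ and $\bm\zeta$), and (ii) tracking the numerical factor in the diffusion term, where the noise covariance $\beta^{-1}(\bm\Lambda+\bm\Lambda^T)=2\beta^{-1}\bm\Lambda_s$ is halved by the Fokker--Planck convention to produce the $\beta^{-1}\bm\Lambda_s$ that exactly matches the drift. The structural reason the argument closes is the skew $\mb P$--$\bm\zeta$ coupling built into \eqref{eq:CGMD_app} (the $+\bm\Xi\bm\zeta$ versus $-\bm\Xi^T\mb V$), which routes the entire $\bm\Xi$-dependence into the energy-conserving sector and leaves the dissipative sector purely Gaussian in $\bm\zeta$.
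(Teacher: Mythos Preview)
Your proof is correct and follows essentially the same strategy as the paper: verify that $\rho_{\rm eq}$ annihilates the Fokker--Planck operator of \eqref{eq:CGMD_app}. The paper packages the computation as a single matrix identity---writing the drift as $\mb G(\mb Q)\nabla_{\tilde{\mb Z}} W$ and reducing the stationary condition to $\nabla\cdot\big(\tfrac12(\mb G-\mb G^T)\nabla\rho_{\rm eq}\big)=0$ with an antisymmetric matrix---whereas you split explicitly into a divergence-free, energy-conserving conservative part and an Ornstein--Uhlenbeck dissipative part in $\bm\zeta$; these are equivalent organizations of the same calculation.
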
 

\begin{proof}
Let $\tilde{\mb Z} = [\mb Q, \mb P, \bm\zeta]$ denote the resolved variables and $W(\tilde{\mb Z}) = U(\mb Q) + \mb P^T \mb M^{-1} \mb P/2 + \bm \zeta^T \bm\zeta/2$ the free energy of the extended dynamics. Model \eqref{eq:CGMD_app} can be written as the following gradient dynamics
\begin{equation*}
\frac{\diff \tilde{\mb Z}}{\diff t} = 
\underbrace{\begin{pmatrix} 0 &\mb I &0 \\
-\mb I& 0 & \bm\Xi(\mb Q) \\
 0 &\bm\Xi(\mb Q)^T &\bm\Lambda
\end{pmatrix}}_{\mb G(\mb Q)}\nabla_{\tilde{\mb Z}} W(\tilde{\mb Z}) + \tilde{\bm\xi}(t),
\end{equation*}
where $\tilde{\bm\xi}(t) = [0, 0, \bm\xi(t)]$. Accordingly, the Fokker-Planck equation takes the form 
\begin{equation*}
\frac{\partial \rho(\tilde{\mb Z}, t)}{\partial t} = \nabla \cdot \left(-\mb G(\mb Q) \nabla W(\tilde{\mb Z}) \rho(\tilde{\mb Z}, t) - \frac{1}{2} \beta^{-1}(\mb G(\mb Q) + \mb G(\mb Q)^T) \nabla \rho(\tilde{\mb Z}, t) \right). 
\end{equation*}
Plug Eq. \eqref{eq:rho_eq_app} into the above equation, we have
\begin{equation}
\begin{split}
\nabla \cdot \left(\beta^{-1} \mb G(\mb Q) \nabla \rho_{\rm eq}(\mb z, t)   - \frac{1}{2} \beta^{-1}(\mb G(\mb Q)+ \mb G(\mb Q)^T) \nabla \rho_{\rm eq}(\mb z, t) \right) 
&=  \beta^{-1} \nabla \cdot  \left(\tilde{\bm\Lambda}^A   \nabla \rho_{\rm eq}(\mb z, t)\right) \\
&\equiv 0,
\end{split}
\end{equation}
where $\tilde{\bm\Lambda} = {\rm diag}(0, 0, \bm\Lambda)$ and $\tilde{\bm\Lambda}^A$ is anti-symmetric. 
\end{proof}

\section{Conservative free energy of the CG model}
\label{sec_SI:CG_force}
The equilibrium density distribution of the CG model needs to match the marginal density distribution of the CG variables of the full model. Due to the unresolved atomistic degrees of freedom, the conservative CG potential $U(\mb Q) = - \beta^{-1} \ln \left [\int  \delta (\bm\phi^{Q}(\mb q) - \bm Q)  \rho_0(\mb q) \diff \mb q \right] $ (up to a constant) generally encodes the many-body interactions even if the full MD
force field is governed by two-body interactions.  As shown in the previous study \cite{Lei_Cas_2010,hijon2010mori}, accurate modeling of this many-body potential $U(\mb Q)$ is crucial for predicting the static/equilibrium structure properties such as the radial distribution, angle  (i.e., three-body) distribution, and the equation of state. It provides the starting point for the present study focusing on constructing reliable reduced models that accurately predict the non-equilibrium processes on the collective scale.

To establish a fair comparison among the various CG models, we use the \emph{same} conservative CG potential $U(\mb Q)$ constructed by DeePCG \cite{Zhang_DeePCG_JCP_2018} method for all the CG models. As shown in Fig.~\ref{fig:RFD}, all the CG models can accurately recover the radius distribution function (RDF) of the full MD model, where the standard pairwise approximation shows limitations.  This result validates the accuracy of the constructed $U(\mb Q)$. Therefore, the different non-equilibrium properties predicted by the various CG models (presented in the main manuscript) arise from the different formulations of the memory term $\mb K(\mb Q, t)$, which is the main focus of the present study.  

\begin{figure}
    \centering
    \includegraphics[width=0.5\textwidth]{./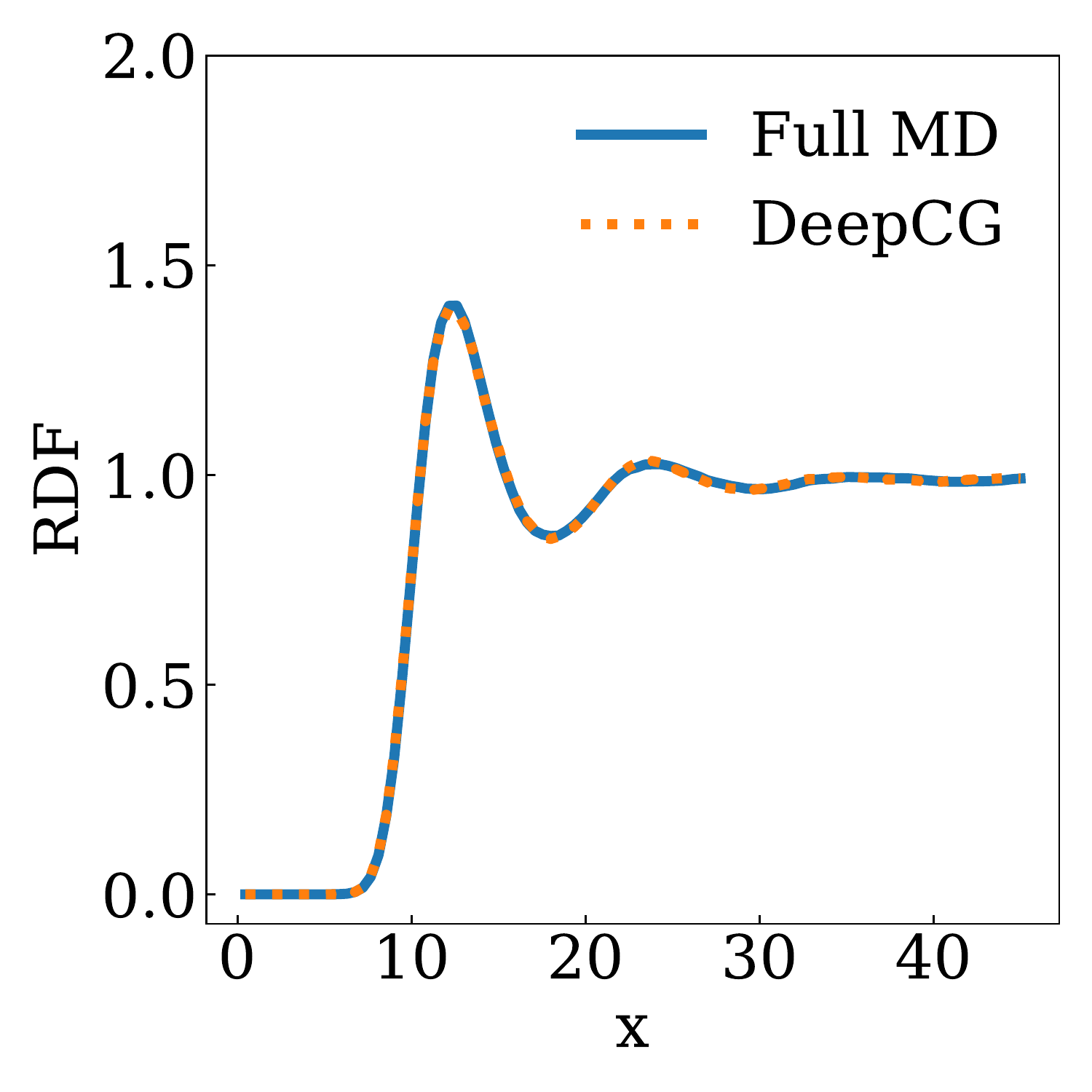}
    \caption{The radius distribution function (RDF) of the full MD and various CG models with the same conservative CG potential $U(\mb Q)$ constructed by the DeepCG model.}
    \label{fig:RFD}
\end{figure}

\section{Symmetry-preserving neural network representation}
\label{sec_SI:NN}

Preserving the physical symmetry constraints is crucial for both the accuracy and the generalization ability of the constructed ML-models. Besides the conservative potential $U(\mb Q)$, the constructed memory term will need to satisfy the translation$\mhyphen$ and permutation$\mhyphen$invariance, as well as the rotation$\mhyphen$symmetries. Let $\mathcal{T}_{\mb b}$, $\mathcal{R}_{\mathcal{U}}$, and $\mathcal{P}_\sigma$ denote the translation, rotation, and permutation operator whose actions on a general function $ \mathcal{F}( \mb Q_1, \cdots ,\mb Q_M )$ defined by
\begin{equation}
\begin{split}
\mathcal{T}_{\mb b} \mathcal{F}( \mb Q_1, \cdots ,\mb Q_M ) &:=  \mathcal{F}( \mb Q_1 +\mb b, \cdots ,\mb Q_M +\mb b), \\
\mathcal{R}_{\mathcal{U}} \mathcal{F}( \mb Q_1, \cdots ,\mb Q_M ) &:= \mathcal{F} ( \mb Q_1 \mathcal{U}, \cdots ,\mb Q_M \mathcal{U}), \\
\mathcal{P}_\sigma \mathcal{F}( \mb Q_1, \cdots ,\mb Q_M ) &:= \mathcal{F} ( \mb Q_{\sigma(1)} , \cdots ,\mb Q_{\sigma(M)}), \\
\end{split}
\end{equation}
where $\mb b\in \mathbb{R}^3$ is a position vector, $\mathcal{U} \in \mathbb{R}^{3\times3}$ is an orthogonal matrix and $\sigma$ is an arbitrary permutation of the set of indices. The components of the constructed memory will need to satisfy the symmetry constraints
\begin{equation}
\begin{split}
\mathcal{T}_{\mb b} \bm\Xi_{ij}(\mb Q_1, \cdots,\mb Q_M ) &= \bm\Xi_{ij}(\mb Q_1, \cdots,\mb Q_M) \\ 
\mathcal{R}_{\mathcal{U}} \bm\Xi_{ij}(\mb Q_1, \cdots,\mb Q_M) &= \mathcal{U} \bm\Xi_{ij}(\mb Q_1, \cdots,\mb Q_M) \mathcal{U}^T \\
\mathcal{P}_\sigma \bm\Xi_{ij}(\mb Q_1, \cdots,\mb Q_M ) &= \bm\Xi_{\sigma(i) \sigma(j)}(\mb Q_{\sigma(1)}, \cdots,\mb Q_{\sigma(M)}),    
\end{split} 
\label{eq_SI:symmetry_condition}
\end{equation}

\begin{proposition} 
The representation $\displaystyle \bm\Xi_{ij} = \sum_{k=1}^K h_k(\hat{\mb Q}_{ij}^T \hat{\mb Q}_{ij} ) \left(\hat{\mb Q}_{ij}^k\right) \left(\hat{\mb Q}_{ij}^k\right)^T + h_0(\hat{\mb Q}_{ij}^T \hat{\mb Q}_{ij} ) \mb I$ preserves the symmetry conditions \eqref{eq_SI:symmetry_condition}, where $\displaystyle \hat{\mb Q}_i^k = \mb Q_i + \sum_{l\in\mathcal{N}_i} f^k(Q_{il}) \mb Q_{il}$ represents the local environment-determined features (generalized coordinate) for the $i\mhyphen$th particle, $\mb f: \mathbb{R} \to \mathbb{R}^K$ and $\mb h: \mathbb{R}^{K\times K}\to \mathbb{R}^{K+1}$ are two encoder functions. 
\end{proposition}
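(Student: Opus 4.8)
The plan is to reduce the whole statement to the transformation behavior of the single building block $\hat{\mb Q}_i^k = \mb Q_i + \sum_{l\in\mathcal{N}_i} f^k(Q_{il})\mb Q_{il}$ and then propagate that behavior through the algebraic construction of $\bm\Xi_{ij}$. First I would record the three elementary rules for the generalized coordinate. The observation driving all of them is that the relative vector $\mb Q_{il} = \mb Q_i - \mb Q_l$ and the scalar distance $Q_{il} = \Vert \mb Q_{il}\Vert$ are the natural invariants: $\mb Q_{il}$ is unchanged by a global translation $\mb Q_m \mapsto \mb Q_m + \mb b$, and under the rotation action $\mb Q_m \mapsto \mathcal{U}\mb Q_m$ it maps to $\mathcal{U}\mb Q_{il}$ while $Q_{il}$ is fixed because $\mathcal{U}^T\mathcal{U} = \mb I$. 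Consequently the scalar $f^k(Q_{il})$ is both translation- and rotation-invariant, and tracking the explicit prefactor $\mb Q_i$ yields $\mathcal{T}_{\mb b}\hat{\mb Q}_i^k = \hat{\mb Q}_i^k + \mb b$ and $\mathcal{R}_{\mathcal{U}}\hat{\mb Q}_i^k = \mathcal{U}\hat{\mb Q}_i^k$.

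Second, I would establish the permutation rule $\mathcal{P}_\sigma\hat{\mb Q}_i^k = \hat{\mb Q}_{\sigma(i)}^k$, which I expect to be the main obstacle. Applying $\mathcal{P}_\sigma$ replaces every $\mb Q_m$ by $\mb Q_{\sigma(m)}$, so the distance between labels $i$ and $l$ in the relabeled configuration becomes $Q_{\sigma(i)\sigma(l)}$ and the neighbor set of label $i$ becomes $\sigma^{-1}(\mathcal{N}_{\sigma(i)})$. The key step is to re-index the neighbor sum by $m = \sigma(l)$: because the sum over $\mathcal{N}_i$ is a symmetric, order-independent aggregation, the relabeling changes only the summation variable and not the value, so the sum collapses to $\mb Q_{\sigma(i)} + \sum_{m\in\mathcal{N}_{\sigma(i)}} f^k(Q_{\sigma(i)m})\mb Q_{\sigma(i)m} = \hat{\mb Q}_{\sigma(i)}^k$. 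I would treat the cut-off $\mathcal{N}_i = \{l : Q_{il} < r_c\}$ explicitly here, since the permutation equivariance of the features hinges on this set transforming consistently with the relabeling.

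Third, I would pass to the difference $\hat{\mb Q}_{ij}^k = \hat{\mb Q}_i^k - \hat{\mb Q}_j^k$, which promotes the three rules to $\mathcal{T}_{\mb b}\hat{\mb Q}_{ij}^k = \hat{\mb Q}_{ij}^k$ (the shifts $\mb b$ cancel), $\mathcal{R}_{\mathcal{U}}\hat{\mb Q}_{ij}^k = \mathcal{U}\hat{\mb Q}_{ij}^k$, and $\mathcal{P}_\sigma\hat{\mb Q}_{ij}^k = \hat{\mb Q}_{\sigma(i)\sigma(j)}^k$. From these, the Gram matrix $\hat{\mb Q}_{ij}^T\hat{\mb Q}_{ij}$, whose $(k,k')$ entry is $(\hat{\mb Q}_{ij}^k)^T\hat{\mb Q}_{ij}^{k'}$, is simultaneously translation-invariant, rotation-invariant (the factor $\mathcal{U}^T\mathcal{U} = \mb I$ drops out of each inner product), and permutation-covariant with its indices moving from $(i,j)$ to $(\sigma(i),\sigma(j))$. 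Hence every scalar coefficient $h_k(\hat{\mb Q}_{ij}^T\hat{\mb Q}_{ij})$ and $h_0(\hat{\mb Q}_{ij}^T\hat{\mb Q}_{ij})$ is translation- and rotation-invariant and simply carries the index swap under permutation.

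Finally I would assemble $\bm\Xi_{ij}$ termwise. The outer product $\hat{\mb Q}_{ij}^k(\hat{\mb Q}_{ij}^k)^T$ is translation-invariant, transforms as $\mathcal{U}\big[\hat{\mb Q}_{ij}^k(\hat{\mb Q}_{ij}^k)^T\big]\mathcal{U}^T$ under rotation, and moves its indices to $(\sigma(i),\sigma(j))$ under permutation; the identity block obeys $\mb I = \mathcal{U}\mb I\mathcal{U}^T$ and is inert under translation and permutation. Combining the invariant scalar prefactors with these tensor blocks, the three identities of \eqref{eq_SI:symmetry_condition} follow for each $k$ and then by summation, with the rotation factors $\mathcal{U}(\cdot)\mathcal{U}^T$ pulling out of the entire sum. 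The only genuinely non-routine ingredient is the permutation equivariance of the encoded features in the second step; once the invariants $\mb Q_{il}$ and $Q_{il}$ are identified, the translation and rotation parts reduce to direct substitutions.
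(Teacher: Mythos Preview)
Your proposal is correct and follows essentially the same route as the paper: establish the transformation rules $\mathcal{T}_{\mb b}\hat{\mb Q}_i^k=\hat{\mb Q}_i^k+\mb b$, $\mathcal{R}_{\mathcal{U}}\hat{\mb Q}_i^k=\mathcal{U}\hat{\mb Q}_i^k$, $\mathcal{P}_\sigma\hat{\mb Q}_i^k=\hat{\mb Q}_{\sigma(i)}^k$ from the behavior of $\mb Q_{il}$ and $Q_{il}$, pass to the differences $\hat{\mb Q}_{ij}^k$ and the Gram matrix $\hat{\mb Q}_{ij}^T\hat{\mb Q}_{ij}$, and then read off the symmetry of $\bm\Xi_{ij}$ termwise. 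Your treatment of the permutation step (re-indexing the neighbor sum via $m=\sigma(l)$ and tracking $\mathcal{N}_i\mapsto\sigma^{-1}(\mathcal{N}_{\sigma(i)})$) is in fact more explicit than the paper's, which simply appeals to the permutation invariance of a symmetric sum.
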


\begin{proof}
We note that $\mathcal{T}_{\mb b}\mb Q_{ij} = \mathcal{T}_{\mb b}\mb Q_{i} - \mathcal{T}_{\mb b}\mb Q_{j}=\mb Q_{ij}$, $\mathcal{T}_{\mb b}Q_{ij} = \left\Vert\mathcal{T}_{\mb b}\mb Q_{i} - \mathcal{T}_{\mb b}\mb Q_{j}\right\Vert = Q_{ij}$, $\mathcal{R}_{\mathcal{U}} \mb Q_{ij} =  \mathcal{U} \mb Q_{ij}$, $\mathcal{R}_{\mathcal{U}} Q_{ij} = Q_{ij}$, $\mathcal{P}_\sigma \mb Q_{ij} = \mb Q_{\sigma(i)\sigma(j)}$, and  $\mathcal{P}_\sigma Q_{ij} = Q_{\sigma(i)\sigma(j)}$.   
Therefore, for arbitrary indices $i$ and $k$, the feature $\hat{\mb Q}_i^k$ satisfy the following symmetry conditions   
\begin{equation}
\begin{split}
\mathcal{T}_{\mb b} \hat{\mb Q}_i^k &= \mathcal{T}_{\mb b}\mb Q_i + \sum_{l\in\mathcal{N}_i} f^k(\mathcal{T}_{\mb b} Q_{il}) \mathcal{T}_{\mb b}\mb Q_{il} = \hat{\mb Q}_i^k + \mb b\\
\mathcal{R}_{\mathcal{U}} \hat{\mb Q}_i^k &= \mathcal{R}_{\mathcal{U}}\mb Q_i + \sum_{l\in\mathcal{N}_i} f^k(\mathcal{R}_{\mathcal{U}} Q_{il}) \mathcal{R}_{\mathcal{U}}\mb Q_{il} = \mathcal{U} \hat{\mb Q}_i^k  \\
\mathcal{P}_\sigma \hat{\mb Q}_{i}^k &= \mathcal{P}_\sigma\mb Q_i + \sum_{l\in{\mathcal{N}_{\sigma(i)}}} f^k(\mathcal{P}_\sigma Q_{il}) \mathcal{P}_\sigma \mb Q_{il} = \hat{\mb Q}_{\sigma(i)}^k,
\end{split}
\label{eq_SI:symmetry_Q}
\end{equation}
where we have used the fact that $\sum_{l} f(r_l)\mb r_l$ is permutational invariant for the last equation.

Therefore, we have $\mathcal{T}_{\mb b}\hat{\mb Q}_{ij} = \mathcal{T}_{\mb b}\hat{\mb Q}_{i} - \mathcal{T}_{\mb b}\hat{\mb Q}_{j}=\hat{\mb Q}_{ij}$, $\mathcal{T}_{\mb b}\hat{Q}_{ij} = \|\mathcal{T}_{\mb b}\hat{\mb Q}_{i} - \mathcal{T}_{\mb b}\hat{\mb Q}_{j}\|= \hat{Q}_{ij}$, $\mathcal{R}_{\mathcal{U}} \hat{\mb Q}_{ij} = \mathcal{U} \hat{\mb Q}_{ij} $, $\mathcal{R}_{\mathcal{U}} \hat{Q}_{ij} = \hat{Q}_{ij}$, $\mathcal{P}_\sigma \hat{\mb Q}_{ij} = \hat{\mb Q}_{\sigma(i)\sigma(j)}$, and  $\mathcal{P}_\sigma \hat{Q}_{ij} = \hat{Q}_{\sigma(i)\sigma(j)}$. Thus, for arbitrary indices $i,j$ and $k$, the encoder functions $h_k(\hat{\mb Q}_{ij} \hat{\mb Q}_{ij}^T) $ satisfy the following symmetry condition
\begin{equation}
\begin{split}
\mathcal{T}_{\mb b} h_k(\hat{\mb Q}_{ij}^T \hat{\mb Q}_{ij}) &=  h_k((\mathcal{T}_{\mb b} \hat{\mb Q}_{ij}){}^T\mathcal{T}_{\mb b} \hat{ \mb Q}_{ij}) = h_k(\hat{\mb Q}_{ij}^T \hat{\mb Q}_{ij})\\
\mathcal{R}_{\mathcal{U}} h_k(\hat{\mb Q}_{ij}^T \hat{\mb Q}_{ij}) &=  h_k((\mathcal{R}_{\mathcal{U}}\hat{\mb Q}_{ij}){}^T \mathcal{R}_{\mathcal{U}}\hat{\mb Q}_{ij} ) =h_k(\hat{\mb Q}_{ij}^T \hat{\mb Q}_{ij}) \\
\mathcal{P}_\sigma h_k(\hat{\mb Q}_{ij}^T \hat{\mb Q}_{ij}) &= h_k(\hat{\mb Q}_{\sigma(i)\sigma(j)}^T \hat{\mb Q}_{\sigma(i)\sigma(j)}).
\end{split}
\label{eq_SI:symmetry_h}
\end{equation}
Plugging Eq. \eqref{eq_SI:symmetry_h} into the definition of $\bm\Xi_{ij}$ yields \eqref{eq_SI:symmetry_condition}.
\end{proof}

\section{Training Details}
\label{sec_SI:training}
With the equilibrium stage presented in Sec. \ref{sec_SI:MD}, we use constrained dynamics to collect samples of the instantaneous force $\mb F(t)$ on individual molecules with a fixed configuration $\bm Z:= [\tilde{\mb Q}, \tilde{\mb P}]$, where $\tilde{\mb Q}$ and $\tilde{\mb P}$ represent the COMs and total momentum of the individual molecules. As they are linear functions of the full phase space vector $\mb z = [\mb q, \mb p]$, the constraint dynamics (see Ref. \cite{hijon2010mori}) for the $j\mhyphen$th atomistic particle associated with the $i\mhyphen$th molecule follows 
  \begin{equation}
  \begin{split}
   \dot{\mb q}_j &= m^{-1} \mb p_j -   \tilde{\mb Q}_i \\
   \dot{\mb p}_j &= -\nabla_{\mb q_j} V(\mb q) + \frac{1}{N_m} \sum_{k\in \mathcal{N}_i} \nabla_{\mb q_k} V(\mb q)
  \end{split}  
  \label{eq_SI:constraint}
  \end{equation}
where $V(\mb q)$ is the potential function of the full MD model and $N_m$ is the number of atoms per molecule.  With $\mb Z(0) = \bm Z$, we have $\mb Z(t) \equiv \bm Z$ for $t > 0$ under \eqref{eq_SI:constraint}.  The memory kernel can be sampled from the time correlation as 
\begin{equation}
\begin{split}
\mb K_{\rm MZ}(\bm Z , t) = \left \langle  \delta \mb F(t) \delta \mb F(0)^T\right\rangle_{\bm Z} ,
\end{split}
\end{equation}
where $\delta \mb F = \mb F - \mathcal{P}_{\bm Z }(\mb F)$ is the fluctuation force on individual molecules and $\mathcal{P}_{\bm Z }(\mb F)$ is the mean force obtained from the many-body potential $U(\mb Q)$ discussed in \ref{sec_SI:CG_force}.  We collect two configuration samples consisting of $486$ molecules. For each configuration, $5000$ independent ensembles are conducted with a production stage of $500000$ steps to compute the correlation function. 

The encoder functions $f$ and $h$ are parameterized as 4-layer fully connected neural networks. Each hidden layer consists of $10$ neurons. The number of state-dependent features is set to be $K=10$ and the number of non-Markovian features $n=5$.

The NNs are trained by Adam \cite{Kingma_Ba_Adam_2015} for 1000000 steps. For each step, 5 targeted CG particles and their neighbors within the cutoff will be selected as one training set.  The initial learning rate is $1\times 10^{-3}$ and the decay rate is $0.5$ per 100000 steps.


%

\end{document}